\newtheorem{theorem}{Theorem}[section]
\newtheorem{lemma}[theorem]{Lemma}
\newtheorem{proposition}[theorem]{Proposition}
\newtheorem{remark}{Remark}
\newcounter{mytempeqncnt}
\newcommand{\Ex}{\mathop{\bf E\/}}
\newcommand{\rate}{R}
 \renewcommand\subsubsection{\@startsection{subsubsection}{3}{\z@}%
                        {-18\p@ \@plus -4\p@ \@minus -4\p@}%
                        {8\p@ \@plus 4\p@ \@minus 4\p@}
                        {\normalfont\normalsize\bfseries\boldmath
                         \rightskip=\z@ \@plus 8em
                          \pretolerance=10000 }}
\begin{document}

\title{Cross-Layer Modeling of Randomly Spread CDMA Using Stochastic Network Calculus}

\author{\IEEEauthorblockN{Kashif Mahmood\IEEEauthorrefmark{1}, Mikko Vehkaper{\"a}\IEEEauthorrefmark{2}, Yuming Jiang\IEEEauthorrefmark{1}}       

\IEEEauthorblockA{\IEEEauthorrefmark{1} Q2S, Norwegian University of Science and Technology,~Norway\\
\IEEEauthorrefmark{2} School of Electrical Engineering and the ACCESS Linnaeus Center,~KTH,~Sweden  }}

\maketitle

\begin{abstract}
Code-division multiple-access (CDMA) has the potential to support traffic sources with a wide range of quality of service (QoS) requirements.
The traffic carrying capacity of CDMA channels under QoS constraints (such as delay guarantee) is, however, less well-understood.
In this work, we propose a method based on stochastic network calculus and large system analysis to quantify the maximum traffic that can be carried by a multiuser CDMA network under the QoS constraints.
At the physical layer, we have linear minimum-mean square error receivers and adaptive modulation and coding, while the channel service process is modeled by using a finite-state Markov chain.
We study the impact of delay requirements, violation probability and the user load on the traffic carrying capacity under different signal strengths.
A key insight provided by the numerical results is as to how much one has to back-off from capacity under the different delay requirements.
\end{abstract}
%
%

\section{Introduction}
The rapid growth of delay sensitive applications, such as VOIP and IP video, and the scarcity of the radio spectrum require the design of spectrally-efficient systems with quality of service (QoS) support.
Direct-sequence code-division multiple-access (DS-CDMA) \cite{Verdu-1998}
is an efficient technique that supports
flexible scaling of the user population,
bursty traffic sources, and a wide range of QoS requirements,
such as delay guarantees.
A fundamental question is: What is the traffic carrying capacity (throughput) of CDMA networks under the given delay constraints?
Answering this question, however, is a challenging task because the existing physical layer channel models do not explicitly take into account the source burstiness and the QoS requirements.
Furthermore, the time varying nature of the wireless channel may itself
cause severe QoS violations.
The above reasons call for a cross-layer approach that can model the link-layer user requirements (e.g. delay guarantee) and at the same time take into account the physical-layer techniques, such as error control coding \cite{Richardson-Urbanke-2008} and multiuser detection \cite{Verdu-1998}.
%
Indeed, predicting the throughput%
\footnote{Throughout the paper, the term \emph{delay} refers to \emph{queuing delay} and the \emph{throughput} as the \emph{delay constrained throughput}.}
regions for a given delay guarantee is seen as one of the grand challenges in multiuser networks~\cite{NIT:Jan08:RethinkingIT}.

%
%
%
%
%
Two front runner theories for cross-layer modeling of wireless networks are the
\emph{effective capacity}~\cite{EBWireless:DWU03}, and the \emph{stochastic network calculus}~\cite{NetCal:Fidler06:MGFfadingChannel,NetCal:Jiang05:StochServGuarantServerModel}.
The former has been used, e.g., for cross-layer modeling of single user multiple-input multiple-output (MIMO) channels with memory~\cite{EBWireless:Tang2007:CrossLayerModelingforQoS}.
An attempt was also made in \cite{CDMA:EffectiveBW:YuF06} to use efficient bandwidth for cross-layer modeling of wireless CDMA networks with linear minimum-mean square error (LMMSE) receivers.
In~\cite{CDMA:EffectiveBW:YuF06}, however, only
\emph{memoryless channels} were considered
and the applicability of the proposed approach for quantifying the traffic carrying capacity of CDMA networks under delay constraints is not clear.
\emph{Stochastic network calculus} (NetCal)~\cite{NetCal:Jiang08:SNC:Bk}, on the other hand, is a more general theory that has been recently applied for QoS analysis of wireless channels with memory~\cite{NetCal:Jiang05:StochServGuarantServerModel,NetCal:Fidler06:MGFfadingChannel}.

In this paper, we propose a cross-layer approach to predict the \emph{traffic carrying capacity} of multiuser CDMA networks under a given delay guarantee.
We make use of moment generating function (MGF) based stochastic NetCal that was proposed in \cite{NetCal:Chang00:PerGuaran:Bk}, and further established and developed in \cite{NetCal:Fidler2006:EoEProbabNetCalWithMGF}.
In stochastic NetCal, a communication system is described by the arrivals at the ingress, the service it provides, and the departures at the egress.
It enables easily the consideration of statistical independence between arrivals and system service.
%
In contrast to the previous work, for example, in~\cite{CDMA:EffectiveBW:YuF06}, here the multiuser CDMA channel is considered to have \emph{memory}.
A significant amount of work has been done on link-layer channel modeling of such wireless channels (see~\cite{WirelessMarkov:Sadeghi08:FiniteStateMarkov:Survey} and the references therein), especially in the single-user setting.
Most of this work builds on the finite-state Markov channel (FSMC) model,
which has been successfully applied on the analysis of delay constrained
single antenna \cite{EBWireless:Li07} and MIMO \cite{EBWireless:Tang2007:CrossLayerModelingforQoS} systems, to model the memory in the wireless channels.
To the best of our knowledge, however, none of the works based on FSMC considers CDMA networks or uses
the stochastic NetCal to analyze the delay constrained throughput of multiuser
networks.


The goals and major contributions of this work are as follows.
We present a methodology to calculate the \emph{network delay constrained throughput} of a multiuser DS-CDMA system that employs adaptive modulation
and coding (AMC) at the transmitter, and LMMSE multiuser detection at the
receiver.
The wireless channel is modeled via a FSMC and MGF based NetCal is used to compute the throughput of the system.
The formulation is valid for any traffic arrival process 
for which the MGF or a bound on the MGF exists.
In the analysis, we incorporate statistical independence of arriving traffic and the service provided by the channel.
We show the impact of the delay guarantee, violation probability and the user load on the throughput under various conditions, such as signal strength and fading speed.


\section{The System}
\label{sec:prelim}

The present paper considers a synchronous uplink CDMA channel 
with $K$ users.
Each mobile station (MS) is equipped with a single transmit antenna and the
base station (BS) has a single receive antenna.  
Perfect channel state information (CSI) is assumed to be available at the BS, while the MSs have no direct access to CSI.
%
At the physical layer, a discrete time signal model indexed by
$n = 1,2,\ldots$ is considered.
We let the number of \emph{active users} $K_n \leq K$ at any discrete
time instant $n$ be a random variable, modeling for example
the bursty information sources.
It is assumed that the BS
knows the set of active users at all times.

With the above assumptions,
the discrete time received signal after matched filtering and sampling
reads \cite{Verdu-1998,CDMA:LinearMultiuserReceivers:Tse-Hanly-99}
\begin{equation}
  \mathbf{y}_{n} = 
  \sum_{k = 1 }^{K} 
  h_{k} \mathbf{s}_{k} b_{k,n} + \mathbf{v}_{n}  \in \mathbb{C}^{M},
  \quad n = 1,2,\ldots,
  \label{eq:sys_model}
\end{equation}
where
$b_{k,n}$ is the code symbol transmitted by the
$k$th user at time instant $n$.
We let $b_{k,n} = 0$ if the user $k$ is not active and
$b_{k,n} \in \mathbb{C}\setminus \{0\}$ otherwise.
The code symbols of the active users are assumed to be independent with
zero mean and unit variance.
The length (the spreading factor) of the signature sequences $\{\mathbf{s}_{k}\}_{k=1}^{K}$ is $M$ for all users. The random vector $\mathbf{v}\sim\mathsf{CN}(\mathbf{0}, \sigma^{2}\mathbf{I}_{M})$
represents samples of additive white Gaussian noise (AWGN), where
$\mathbf{I}_{M}$ denotes
the identity matrix of size $M \times M$, and $\mathsf{CN}(\boldsymbol{\mu},
\mathbf{\Psi})$ stands for the circularly symmetric complex Gaussian
(CSCG) distribution
with mean~$\boldsymbol{\mu}$ and covariance~$\mathbf{\Psi}$.
The channels between the MSs
and the BS are assumed to be block fading
(see, e.g., \cite{Fabregas-Caire-2006})
with a fixed
coherence time.
Throughout the paper we assume that
the code words transmitted by the users are shorter than
the coherence time, and omit the block index for
notational simplicity.
The fading coefficients $h_{k} \sim\mathsf{CN}(0, 1)$
for all $k=1,\ldots,K$ are considered to be independent and
identically distributed (IID), but
extension to unequal average received powers is straightforward.
Random spreading \cite{Verdu-1998,CDMA:LinearMultiuserReceivers:Tse-Hanly-99,
CDMA_ImpactofFreqFlatFading:Shamai-Verdu:01} is assumed to be
employed at the MSs, so
that the signature sequences $\{\mathbf{s}_{k}\}_{k=1}^{K}$ are
IID CSCG random vectors
$\mathbf{s}_{k}\sim\mathsf{CN}(\mathbf{0}, \frac{1}{M} \mathbf{I}_{M})$.
 For future convenience,
we define the user load of the system at time instant $n$ as the ratio
$\alpha_{n} = K_{n} / M$, and the average received signal to noise ratio (SNR) as
  $\mathsf{SNR}_{\mathrm{avg}} = 1 / \sigma^{2}.$

%
%
\subsection{LMMSE Receiver for Multiuser CDMA}
\label{ssec:lmmse-mud}

Multiuser receivers mitigate, in addition to background noise, the interference between the users.
It is well understood that
\emph{the choice of multiuser detector has a significant effect on the system capacity} (see, e.g., \cite{Verdu-1998,CDMA:LinearMultiuserReceivers:Tse-Hanly-99, 
CDMA_ImpactofFreqFlatFading:Shamai-Verdu:01}). The effect of
\emph{delay constraints}
on the throughput of the CDMA network is, however, not yet well understood.
Here we consider the impact of such effects on the throughput obtained by
a multiuser CDMA system equipped with the LMMSE receiver~\cite{Verdu-1998}.
Extensions to other receivers are also possible, but
they have been omitted from the present paper due to
space constraints.

Let us consider without loss of generality the detection of
code symbol $b_{1,n}$ of the first user when the mobile
stations $k = 1,\ldots,K_{n}\leq K$ are active.
The LMMSE receiver forms the decision variable for the first
user at time instant $n$ as
%
%
  $z_{1,n} = \mathbf{c}^{\mathsf{H}}_{1,n} \mathbf{y}_{t},$
%
%
where $\mathbf{c}_{1,n}\in\mathbb{C}^{N}$ and the
superscript $(\cdot)^{\mathsf{H}}$ denotes for the conjugate
transpose of the matrix.
If the BS knows perfectly the channel coefficients
$\{h_{k}\}_{k=1}^{K}$ and the signature sequences
$\{\mathbf{s}_{k}\}_{k=1}^{K}$ of all users, we can define
\begin{align}
  \mathbf{S}_{1,n} &= [\mathbf{s}_{2}\;\cdots\;\,\mathbf{s}_{K_{n}}]
  \in \mathbb{C}^{M \times (K_{n}-1)}, \\
  \mathbf{D}_{1,n} &= \mathrm{diag}(|h_{2}|^{2}, \ldots,|h_{K_{n}}|^{2})
  \in \mathbb{C}^{(K_{n}-1) \times (K_{n}-1)}, \\
  \mathbf{M}_{1,n} &=  \mathbf{S}_{1,n}  \mathbf{D}_{1,n}  \mathbf{S}_{1,n}^{\mathsf{H}}
  + \sigma^{2} \mathbf{I}_{K_{n}-1}
  \in \mathbb{C}^{(K_{n}-1) \times (K_{n}-1)},
\end{align}
and the LMMSE detector for the first user becomes
\cite{Verdu-1998,CDMA:LinearMultiuserReceivers:Tse-Hanly-99}
\begin{align}
 \mathbf{c}_{1,n}
  =\frac{h_{1}}{1+|h_{1}|^{2}
  \mathbf{s}_{1}^{\mathsf{H}}\mathbf{M}^{-1}_{1,n}\mathbf{s}_{1}}
   \mathbf{M}^{-1}_{1,n} \mathbf{s}_{1}.
   \label{eq:lmmse_c1}
\end{align}
The conditional output SINR for the LMMSE receiver reads
\begin{equation}
  \gamma_{1,n}
  \left(\{\mathbf{s}_{k}\}_{k=1}^{K}, \{h_{k}\}_{k=1}^{K}\right)
  =  |h_{1}|^{2}\mathbf{s}_{1}^{\mathsf{H}}\mathbf{M}^{-1}_{1,n}\mathbf{s}_{1}.
  \label{eq:sinr_conditioned_on_h_and_s_lmmse}
\end{equation}
For the purpose of the cross-layer analysis, however, \eqref{eq:sinr_conditioned_on_h_and_s_lmmse} is not a
convenient starting point due to its dependence on the signature
sequences and the channels of all active users.

\subsection{Large System Limit and the Decoupling Principle}
\label{ssec:large_system}

To simplify the physical layer model, we consider throughout 
the paper the \emph{large system limit} in which
both the number of active users $K_{n} \leq K$ and the spreading factor $M$ grow without bound with
a fixed ratio $\alpha_{n} = K_{n} / M$, for all $n=1,2,\ldots$.
In this asymptotic region,
the multiuser CDMA channel \eqref{eq:sys_model} decouples under very general conditions to a set of single-user channels
\cite{CDMA:LinearMultiuserReceivers:Tse-Hanly-99, CDMA:OutputMAI:Zhang-Chong-Tse-01} that do not depend on the signature sequences
$\{\mathbf{s}_{k}\}_{k=1}^{K_{n}}$.  For the case of LMMSE receiver
considered in Section~\ref{ssec:lmmse-mud},
the SINR after multiuser detection simplifies in the large system limit to
\cite{CDMA:LinearMultiuserReceivers:Tse-Hanly-99, CDMA:OutputMAI:Zhang-Chong-Tse-01}
\begin{equation}
  \gamma_{1,n}
  = p_{1} / \beta_{1,n},  
  \label{eq:instantaneous_sinr_lmmse}
\end{equation}
%
where $p_{1}$ represents the instantaneous channel power
in the decoupled single-user channel and is drawn according to
the probability density function (PDF)
\begin{equation}
 f(p_{1}) =
      e^{-p_{1}},  \qquad p_{1} \geq 0.
    \label{eq:pdf_of_p1}
\end{equation}
The noise variance $\beta_{1,n}$ in the equivalent single-user system
is the solution to the fixed point
equation
\begin{equation}
  \beta_{1,n} = \sigma^{2} + \alpha_{n} \int_{0}^{\infty}
  \frac{p \beta_{1,n}}{p+\beta_{1,n}} e^{-p} \mathrm{d}p.
  \label{eq:b1_fp_lmmse}
\end{equation}
Since the interference becomes Gaussian in the
large system limit \cite{CDMA:OutputMAI:Zhang-Chong-Tse-01},
the decision variables $z_{1,n}$ are statistically equivalent to
the received symbols of the single-user (su) channel
\begin{equation}
  z^{\mathrm{su}}_{1,n} = \sqrt{p_{1}} b_{1,n} + n^{\mathrm{su}}_{1,n}, \qquad n = 1,2,\ldots
  \label{eq:equivalent_su_lmmse}
\end{equation}
where
$n^{\mathrm{su}}_{1,n} \sim \mathsf{CN}(0,\beta_{1,n}),
\,n=1,2,\ldots,$ are mutually independent samples of Gaussian noise.
For later use, we denote 
$\overline{\gamma}_{1,n}=\Ex_{p_1}\{\gamma_{1,n}\} = 1/\beta_{1,n}$
for the post-detection SINR \eqref{eq:instantaneous_sinr_lmmse}
that is averaged over the channel gain $p_1$.

%
%

\begin{remark}
  The fixed point equation \eqref{eq:b1_fp_lmmse} is easy to solve
  iteratively and is always guaranteed to converge.  Note also
that $\beta_{1,n}$ has time dependence only through $\alpha_{n}$, so
that in the large system limit only the size of the active user set
matters --- not which users are in it. For the
  purpose of cross-layer analysis, we shall therefore
  \emph{concentrate
  on the equivalent single-user channels} defined by
  \eqref{eq:instantaneous_sinr_lmmse}~--~\eqref{eq:equivalent_su_lmmse},
where
$\{p_{k}\}_{k=1}^{K}$ and
$\{n^{\mathrm{su}}_{k,n}\}_{k=1}^{K}$
  are all mutually independent random variables.
\end{remark}

\subsection{Adaptive Coded Modulation for Multiuser CDMA}


Consider the single-user channel
\eqref{eq:equivalent_su_lmmse}, where the channel
gain $p_{1}$ is fixed.  Let the
code word of the first user be 
$\mathbf{b}_{1} = [b_{1,1}, \ldots,
b_{1,N_1}]^{\mathsf{T}}\in\mathbb{C}^{N_{1}}$, and
denote
the instantaneous received SNR over it
$\boldsymbol{\gamma}_{1}
=[\gamma_{1,n}, \ldots,\gamma_{1,N_1}]^{\mathsf{T}}$.
Assuming the elements of
$\mathbf{b}_{1}$ belong to a discrete set $\mathcal{M}$,
the maximum rate that can be reliably transmitted over
\eqref{eq:equivalent_su_lmmse} reads (see, e.g., \cite{Fabregas-Caire-2006})
\begin{align}
  &I^{(N_1)}_{\mathcal{M}}(\boldsymbol{\gamma}_{1}) = \log_{2}(|\mathcal{M}|) - \log_2(e)
  + \frac{1}{N_1} \sum_{n=1}^{N_1} g_{\mathcal{M}}(\gamma_{1,n}),
  \label{eq:cm_capacity}
\end{align}
as $N_1 \to \infty$.  We denoted above
\begin{align}
  &g_{\mathcal{M}}(\gamma) = \nonumber\\
	&- \frac{1}{\pi|\mathcal{M}|}
	\sum_{b\in\mathcal{M}}
	\int e^{-|v|^2}
	\log_{2}
		\Bigg[
	\sum_{\tilde{b}\in\mathcal{M}}
	e^{
  - |v + \sqrt{\gamma} (b - \tilde{b})|^{2}
	}
	\Bigg] \mathrm{d}v,
\end{align}
and the integral is with respect to a complex variable $v\in\mathbb{C}$.
To introduce AMC to the system, we assume that:
\begin{enumerate}
\item If the MS $k$ is \emph{not active}, it uses a code word $\mathbf{b}_{k} = \boldsymbol{0}$;
\item For \emph{active users}, the code word is selected from a sequence
$\mathcal{C}_{k,0}, \mathcal{C}_{k,1}, \ldots, \mathcal{C}_{k,L}$ of random
code books%
\footnote{For simplicity of notation, we assume that code
words of any desired lengths can be picked from all code books
$\mathcal{C}_{k,l}$.} 
that have rates
$0 = R_{0} < R_{1} < \cdots < R_{L} < \infty$.
\end{enumerate}
All modulation sets
$\mathcal{M}_{l}\subset\mathbb{C}$, $l=0,1,\ldots,L$,
satisfy $\Ex |b_{k,n}|^{2} = 1$, $\forall b_{k,n} \in \mathcal{M}_{l}$,
where $\Ex$ denotes expectation
and $R_{0} = 0$  ``outage''.
For simplicity, the users $k=1,2,\ldots,K_{n}$
are assumed to be active at time instant $n$, and
transmissions are
initiated only at the beginnings of the fading blocks.
%
With this assumption, the number of active users during a 
given fading block satisfies
$K_{n} \geq K_{n'}$ for all time instants $n<n'$.
%

If the BS has knowledge
of $\boldsymbol{\gamma}_{k}
=[\gamma_{k,1}, \ldots,\gamma_{k,N_{k}}]^{\mathsf{T}}$
for all MSs, it can use \eqref{eq:cm_capacity} to find the 
codes for the active users%
\footnote{The users, e.g., inform their
transmit payload before each fading block through a
control channel, and the BS uses
\eqref{eq:instantaneous_sinr_lmmse} (or \eqref{eq:sinr_conditioned_on_h_and_s_lmmse})
to compute  $\boldsymbol{\gamma}_{k}$.}.
This, however, may consume too much uplink bandwidth and
processing power at the BS.
Here we we consider a simplified
AMC scheme where, instead of using \eqref{eq:cm_capacity},
the BS finds
\begin{equation}
  l_{k}^{*} = \max_{l = 0,1,\ldots,M}
  \{l : R_{l} \leq I^{(1)}_{\mathcal{M}_{l}}(\gamma_{k,1})\}, \quad
  k = 1,\ldots,K_{n},
  \label{eq:opt_transmission_mode}
\end{equation}
and feeds back the indexes $\{l_{k}^{*}\}_{k=1}^{K_{1}}$ to the active MSs
before the data transmission starts.
Each active MS then uses $\mathcal{C}_{k,l_{k}^{*}}$ to transmit at rate
$R_{l^{*}}$ over the CDMA channel with a vanishing probability of
error as the code word lengths and coherence time grow large since
$I^{(1)}_{\mathcal{M}_{l}}(\gamma_{k,1}) \leq
I^{(N_{k})}_{\mathcal{M}_{l}}(\boldsymbol{\gamma}_{k})$ always holds.

In theory, we would like to
have a large set of code books with finely spaced rates to
accurately adjust to the channel conditions. However,
designing arbitrary rate codes is difficult and
as the number of code books $L$ grows,
a reliable and very high rate feedback channel is needed
for code book selection.  Thus, the number of \emph{transmission modes} $L$
is in practice small.

\begin{remark}
The authors in, e.g., 
\cite{Liu-Zhou-Giannakis-2004, EBWireless:Tang2007:CrossLayerModelingforQoS} propose to simulate specific AMC schemes in AWGN
channel and use parameter fitting to derive
``analytical'' error probabilities for them.
Such results are, however, heavily dependent on the chosen
error control codes.
We consider instead the modulation constrained capacity
\eqref{eq:cm_capacity} that gives an upper bound for all coding
schemes with the same modulation sets and code rates.
With the new code designs \cite{Richardson-Urbanke-2008},
we expect our results to
provide a close description of modern wireless systems.
\end{remark}

\section{Delay Constrained Throughput Analysis}
    \label{Sec:DelayConstrainedThAnalysis}
%
%


%
%

In this section we concentrate on the decoupled single-user channel
\eqref{eq:equivalent_su_lmmse} of the first user and omit both the
user $k$ and the time $n$ indexes for notational convenience.
%
The wireless channel service process in the block fading channel with AMC
is modeled using an $L$-state finite state Markov chain (FSMC) where the block length (in seconds) is denoted by $T_{\mathrm{b}}$.
Each of the transmission modes is mapped to the corresponding FSMC state.
%
%
Let $\mathbf{P_c}= [ p_{l,l'} ]$ be the $L \times L$ transition matrix of the FSMC.
Assuming slow fading and a relatively small value of $T_{\mathrm{b}}$,
we have $p_{l,l'}=0$ for all $|l-l'|>1$.
The adjacent state transition probabilities are determined as~\cite{WirelessMarkov:FSMC:Wang2002finite}
\begin{equation}
\label{eq:TranProbitoi-1_itoi+1}
\left\{\begin{array}{ll}%
p_{l,l+1} \approx \frac{N \left(  \gamma_{l+1}  \right)  T_{\mathrm{b}} } { \pi_l },&\mbox{$l=1,2,\ldots,L-1,$}%
\\p_{l,l-1} \approx \frac{N \left(  \gamma_{l}  \right)  T_{\mathrm{b}} } { \pi_l },&\mbox{$l=1,2,\ldots,L$}%
\end{array}\right.
\end{equation}
%
%
%
%
where $N\! \left(  \gamma  \right)$ denotes the level crossing rate (LCR) at the SNR value of $\gamma$.
The remaining transition probabilities are given as
\begin{equation} \label{E:TransProb_Remain}
\left\{\begin{array}{lll}%
p_{1,1} = 1 -   p_{1,2}&\mbox{}%
\\p_{L,L} = 1 -   p_{L,L-1}&\mbox{}%
\\p_{l,l} = 1 -   p_{l,l-1}  -   p_{l,l+1},&\mbox{$l=2,3,\ldots,L-1$}.%
\end{array}\right.
\end{equation}
%
%
The level crossing rate reads for $\Gamma \geq 0$~\cite{WirelessMarkov:FSMC:Wang2002finite}
\begin{equation}
 N\! \left(  \Gamma  \right) =
  \sqrt{ \frac{2 \pi \Gamma} { \overline{\gamma} } } f_{\mathrm{m}}
  \exp\left( - \frac{\Gamma}{ \overline{\gamma} } \right),
  \label{eq:LevelCross}
\end{equation}
where $f_{\mathrm{m}} = v / \omega$ is the maximum Doppler frequency of the channel,
defined in terms of the vehicle speed $v$ and the wavelength $\omega$
of the transmitted signal.
The stationary probability~$\pi_l$ of the FSMC being in state $l$ is given as
\begin{equation}
    \label{eq:steadtStateProb}
\pi_l = \int_{\Gamma_l}^{\Gamma_{l+1}}  f(\gamma) d\gamma  ~=~\exp \left ( - \frac{\Gamma_l } {\overline{\gamma}}  \right ) - \exp \left ( - \frac{\Gamma_{l+1} } {\overline{\gamma}}  \right ).
\end{equation}
The PDF of the instantaneous received SNR $f(\gamma)$ 
of the single-user channel \eqref{eq:equivalent_su_lmmse}
is statistically equivalent to the large system post-detection SINR of \eqref{eq:sys_model}, and given by
%
%
\begin{equation}
  f(\gamma) =
  \frac{1}{\overline{\gamma}}
  \exp\left(-\frac{\gamma}{\overline{\gamma}} \right),
  \qquad
   \gamma \geq 0, \\
  \label{eq:pdf of_SNR_SISODSCDMA}
\end{equation}
where $\overline{\gamma} = \Ex_{p_1}\{\gamma_{1,n}\} = 1/\beta_{1,n}$.
We remind the reader that the average SNR $\overline{\gamma}$
is in fact a function of time $n=1,2,\ldots,$
the noise variance at the receiver $\sigma^{2}$, and the
instantaneous user load $\alpha_{n}$ through
the fixed point equation \eqref{eq:b1_fp_lmmse}.

\begin{figure*}[t]
\setcounter{mytempeqncnt}{\value{equation}}
\setcounter{equation}{21}
\begin{IEEEeqnarray}{l}
\normalsize
\label{E:DelayBoundtwice}
d^{\varepsilon}_{\lambda} =  \inf_{\theta > 0}
\left \{ \inf_{\tau \geq 0} \left [ \tau :  \frac{1}{\theta}  \left (  \ln  \sum_{s=\tau}^{\infty} \mathsf{M}_A(\theta,s- \tau) \widehat{\mathsf{M}}_S(\theta,s) - \ln \varepsilon \right ) \leq 0 \right ]  \right \},
\end{IEEEeqnarray}
\hrulefill
\vspace*{-3ex}
\setcounter{equation}{\value{mytempeqncnt}}
\end{figure*}

\subsection{Main Result}

The moment generating function (MGF) of a stationary process $X(t)$ is given by $\mathsf{M}_X(\theta,t) = \Ex \left[e^{\theta X(t)}\right]$.
In the sequel, we denote
$\widehat{\mathsf{M}}_X(\theta,t) = \mathsf{M}_X(-\theta,t)$
for the parameters $\theta > 0$, $t \geq 0$.
The discrete time arrivals and service of the channel are assumed to be independent stationary random processes given by the cumulative
processes $A(0,t)$ and $S(s,t)$ respectively.
For all real $\theta$, the corresponding MGFs are
$\mathsf{M}_A(\theta,t)$ and $\widehat{\mathsf{M}}_S(\theta,t)$, respectively.
Furthermore, let $N_{\mathrm{b}}$ denote the number of information bits per an upper layer data block and $W$ the system bandwidth. 
Omitting again both the
user and the time indexes for notational convenience, 
we can then write the number of data blocks transmitted in state $l$ of the FSMC based channel service process as 
$\tilde{R}_l = \rate_l T_{\mathrm{b}} W / N_{\mathrm{b}}$.

\begin{lemma}\label{Lema:MGF_s}
The MGF of the random process $S(t)$ described by a homogeneous Markov chain with transition matrix~$\mathbf{P_c}$ and stationary state distribution vector~$\boldsymbol{\pi}$, is given for $\theta > 0$ and $t \geq 0$ by
\begin{equation}
\label{E:MS}
\widehat{\mathsf{M}}_S(\theta,t) = \boldsymbol{\pi} (\mathbf{\tilde{R}}(-\theta)\mathbf{P_c})^{t-1}\mathbf{\tilde{R}}(-\theta)\mathbf{1} ,
\end{equation}
where $\mathbf{1}$ is column vector of ones and
\begin{equation}
	 \mathbf{\tilde{R}}(\theta) = \mathrm{diag}(e^{\theta \tilde{R}_1},\dots,e^{\theta \tilde{R}_K}) .
\end{equation}
The steady state vector $ \boldsymbol{\pi}~=~\left[ \pi_1 \; \pi_2 \;\cdots\,\;\pi_L \right] $ is given in~\eqref{eq:steadtStateProb} and
$\mathbf{P_c}$ is obtained from \eqref{eq:TranProbitoi-1_itoi+1}~--~\eqref{E:TransProb_Remain}.
\end{lemma}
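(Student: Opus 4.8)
The plan is to realize the cumulative service over $t$ fading blocks as a sum of the per-block service rates driven by the FSMC, and then to recognize the resulting MGF as a matrix product. First I would write $S(t) = \sum_{i=1}^{t} \tilde{R}_{X_i}$, where $X_i \in \{1,\ldots,L\}$ denotes the state of the Markov chain during block $i$ and $\tilde{R}_l = \rate_l T_{\mathrm{b}} W / N_{\mathrm{b}}$ is the number of data blocks transmitted in state $l$. Using the definition of the MGF together with the $\widehat{\cdot}$ convention that substitutes $-\theta$, this gives
\begin{equation}
\widehat{\mathsf{M}}_S(\theta,t) = \Ex\!\left[e^{-\theta S(t)}\right] = \Ex\!\left[\prod_{i=1}^{t} e^{-\theta \tilde{R}_{X_i}}\right].
\end{equation}

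Next I would condition on the full trajectory $(X_1,\ldots,X_t)$. Because the chain is homogeneous with transition matrix $\mathbf{P_c}$ and is started from its stationary distribution $\boldsymbol{\pi}$, the joint law factors as $\Pr(X_1=x_1,\ldots,X_t=x_t) = \pi_{x_1}\prod_{i=1}^{t-1} p_{x_i,x_{i+1}}$. Substituting this into the expectation turns it into the multi-index sum
\begin{equation}
\widehat{\mathsf{M}}_S(\theta,t) = \sum_{x_1,\ldots,x_t} \pi_{x_1}\, e^{-\theta \tilde{R}_{x_1}} \prod_{i=1}^{t-1} p_{x_i,x_{i+1}}\, e^{-\theta \tilde{R}_{x_{i+1}}}.
\end{equation}
The key observation is that each factor $e^{-\theta \tilde{R}_{x_i}}$ is exactly a diagonal entry of $\mathbf{\tilde{R}}(-\theta)$, so that $\pi_{x_1} e^{-\theta\tilde{R}_{x_1}}$ is the $x_1$ component of the row vector $\boldsymbol{\pi}\,\mathbf{\tilde{R}}(-\theta)$ and each pair $p_{x_i,x_{i+1}} e^{-\theta\tilde{R}_{x_{i+1}}}$ is the $(x_i,x_{i+1})$ entry of $\mathbf{P_c}\,\mathbf{\tilde{R}}(-\theta)$. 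Summing over all indices then collapses the expression into the matrix product $\boldsymbol{\pi}\,\mathbf{\tilde{R}}(-\theta)\,(\mathbf{P_c}\,\mathbf{\tilde{R}}(-\theta))^{t-1}\mathbf{1}$, where the trailing $\mathbf{1}$ performs the final sum over $x_t$.

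Finally I would bring this into the stated form using the elementary regrouping identity $\mathbf{\tilde{R}}(-\theta)(\mathbf{P_c}\mathbf{\tilde{R}}(-\theta))^{t-1} = (\mathbf{\tilde{R}}(-\theta)\mathbf{P_c})^{t-1}\mathbf{\tilde{R}}(-\theta)$, which follows simply by shifting the grouping in the alternating product of $t-1$ copies of each matrix. I expect the main obstacle to be purely combinatorial bookkeeping rather than anything deep: one must track that $t$ time slots contribute $t$ diagonal factors but only $t-1$ transition factors, so an off-by-one in the exponent $t-1$ is the most likely slip. A secondary point worth stating explicitly is that starting the chain from $\boldsymbol{\pi}$ is what makes $S(t)$ stationary and thus justifies applying the stationary-process MGF definition; the boundary case $t=1$, where the product is empty and the formula reduces to $\boldsymbol{\pi}\,\mathbf{\tilde{R}}(-\theta)\mathbf{1}$, should be checked as a sanity test.
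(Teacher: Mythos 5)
Your derivation is correct and is essentially the standard argument: the paper itself gives no proof for this lemma, deferring instead to the cited reference (Chang's book on performance guarantees), where exactly this trajectory-expansion-plus-matrix-regrouping derivation of the MGF of a Markov-modulated process appears. Your bookkeeping of $t$ diagonal factors versus $t-1$ transition factors, and the regrouping identity $\mathbf{\tilde{R}}(-\theta)(\mathbf{P_c}\mathbf{\tilde{R}}(-\theta))^{t-1} = (\mathbf{\tilde{R}}(-\theta)\mathbf{P_c})^{t-1}\mathbf{\tilde{R}}(-\theta)$, both check out.
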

\begin{proof}
We refer the interested reader to \cite{NetCal:Chang00:PerGuaran:Bk}.
\end{proof}

We denote the traffic arrival rate by~$\lambda$ while $d^{\varepsilon}_{\lambda}$ represents a bound with violation probability $\varepsilon$ on the delay.
We can only provide a delay guarantee $d^{\mathrm{g}}$ if $d^{\varepsilon}_{\lambda} \leq d^{\mathrm{g}}$.
\begin{proposition}
The delay constrained throughput~$\lambda_{\mathrm{d}}$ of a DS-CDMA system under a delay guarantee~$d^{\mathrm{g}}$ is given as
\begin{equation}
\label{E:th}
\lambda_{\mathrm{d}} =  ~\alpha \cdot \mathrm{max}~\{\lambda ~ \vert ~ d^{\varepsilon}_{\lambda}\leq d^{\mathrm{g}} \}  ,
\end{equation}
where $d^{\varepsilon}_{\lambda}$, assuming FIFO scheduling, is given in
\eqref{E:DelayBoundtwice} at the top of the page and
$\widehat{\mathsf{M}}_S(\theta,s)$ is given by \eqref{E:MS}.
\end{proposition}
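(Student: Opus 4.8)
The plan is to prove the proposition in two stages: first to derive the per-user delay bound $d^{\varepsilon}_{\lambda}$ of \eqref{E:DelayBoundtwice} by applying MGF-based stochastic network calculus to the decoupled single-user channel, and then to scale the largest admissible arrival rate by the user load $\alpha$ to obtain the network throughput \eqref{E:th}.

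For the delay bound I would model the equivalent single-user channel of the first user as a FIFO queue fed by the stationary arrival process $A(0,t)$ and drained by the Markov-modulated service process $S(s,t)$ of the FSMC, whose transform $\widehat{\mathsf{M}}_S$ is already supplied by Lemma~\ref{Lema:MGF_s}. Under FIFO the virtual delay at time $t$ is
\[
W(t) = \inf\{\, d \geq 0 : A(0,t) \leq D(0, t+d) \,\},
\]
where $D$ is the departure process. The first key step is the min-plus output bound $D(0, t+\tau) \geq \min_{0 \leq r \leq t+\tau}\{\, A(0,r) + S(r, t+\tau) \,\}$, which lets me rewrite the overflow event $\{W(t) > \tau\}$ as the existence of a lag $u \geq 0$ with $A(t-u, t) > S(t-u, t+\tau)$; by stationarity these increments are distributed as $A(0,u)$ and $S(0, u+\tau)$.

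The second step bounds the violation probability. Applying Boole's inequality over the possible lags and then a Chernoff bound with free parameter $\theta > 0$ to each term, and invoking the assumed statistical independence of arrivals and service so that the joint transform factors as $\mathsf{M}_A(\theta, \cdot)\,\widehat{\mathsf{M}}_S(\theta, \cdot)$, yields in steady state
\[
P\{\, W > \tau \,\} \leq \inf_{\theta > 0} \sum_{u=0}^{\infty} \mathsf{M}_A(\theta, u)\,\widehat{\mathsf{M}}_S(\theta, u+\tau).
\]
Re-indexing the summation by $s = u + \tau$ reproduces the inner sum of \eqref{E:DelayBoundtwice}. Requiring this bound not to exceed the target $\varepsilon$, taking logarithms and dividing by $\theta > 0$, the smallest $\tau$ for which the bracketed expression is non-positive, optimized over $\theta$, is by definition $d^{\varepsilon}_{\lambda}$.

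Finally I would convert the per-user guarantee into a network rate. Because $d^{\varepsilon}_{\lambda}$ is non-decreasing in the arrival rate $\lambda$, the guarantee $d^{\varepsilon}_{\lambda} \leq d^{\mathrm{g}}$ is met precisely up to $\lambda_{\max} = \max\{\, \lambda : d^{\varepsilon}_{\lambda} \leq d^{\mathrm{g}} \,\}$; and since in the large-system limit the $K = \alpha M$ decoupled users are statistically identical and mutually independent (Remark), the per-dimension network throughput is $\alpha\,\lambda_{\max}$, which is \eqref{E:th}. I expect the main obstacle to be making the violation-probability bound rigorous: justifying the union bound over the unbounded backlogged horizon together with the Chernoff step, and checking that the independence hypothesis genuinely decouples the joint transform into the product of the two MGFs. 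The accompanying technical point is convergence of the infinite sum, which needs the stability condition that $\mathsf{M}_A(\theta,\cdot)\widehat{\mathsf{M}}_S(\theta,\cdot)$ decay geometrically, i.e.\ that the mean arrival rate stay below the mean service rate.
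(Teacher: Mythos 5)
Your proposal is correct and follows essentially the same route as the paper: the paper's ``proof'' is only a citation to the MGF-based network calculus of Fidler, and its Remark~\ref{Rem:Dbound} explicitly identifies the ingredients you use (Boole's inequality over the backlogged horizon, the Chernoff bound, and the independence of arrivals and service to factor the joint transform into $\mathsf{M}_A\,\widehat{\mathsf{M}}_S$), after which the scaling by the load $\alpha$ matches the paper's per-dimension normalization as in \eqref{eq:capacity_limit}. Your reconstruction, including the re-indexing $s=u+\tau$ and the stability caveat for convergence of the infinite sum, is a faithful expansion of that cited argument rather than a different proof.
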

\addtocounter{equation}{1}
\begin{proof}
We refer the interested reader to \cite{NetCal:Fidler2006:EoEProbabNetCalWithMGF}.
\end{proof}
\begin{remark}
    \label{Rem:Dbound}
The delay bound~\eqref{E:DelayBoundtwice} is calculated using stochastic NetCal approach~\cite{NetCal:Fidler2006:EoEProbabNetCalWithMGF}.
MGF for a variety of arrival models is available in the literature~\cite{NetCal:Chang00:PerGuaran:Bk} but finding the MGF of the service process~$\widehat{\mathsf{M}}_S(\theta,t)$ is a challenging task.
We address this challenge by making use of Lemma~\ref{Lema:MGF_s} to calculate $\widehat{\mathsf{M}}_S(\theta,t).$
Having obtained $\widehat{\mathsf{M}}_S(\theta,t)$, a stochastic bound on the delay $d^{\varepsilon}_{\lambda}$ in \eqref{E:DelayBoundtwice} can be obtained using Chernoff's bound, Boole's inequality and applying the technique in \cite{NetCal:Fidler2006:EoEProbabNetCalWithMGF}.
\end{remark}

We next use the result in this section to calculate the delay constrained throughput of a multiuser DS-CDMA system.
Before closing this section, we present a definition which will be used in the next section. The maximum carried traffic without the delay constraints in $bps$ is given by
\begin{equation}
C_{\mathrm{lim}}=\alpha W \sum_{l=1}^L \rate_l\pi_l .
\label{eq:capacity_limit}
\end{equation}

\section{Application and Numerical Results}
\label{Sec:NumericalResults}

In the numerical results, we consider
the AMC setup given in Table~\ref{tab:transmission_modes}.
The transmission modes follow the HIPERLAN/2
specification (see, e.g.,
\cite{EBWireless:Tang2007:CrossLayerModelingforQoS,
Liu-Zhou-Giannakis-2004}).
The parameter $\gamma_{l}$ in Table~\ref{tab:transmission_modes}
denotes the SNR point in decibels (dBs)
where the $l$th mode is switched on.
%
\begin{table}
  \centering
  \caption{Transmission modes}
\begin{tabular}[t]{llll}
  Modes & $\mathcal{M}_{l}$ & $R_{l}$ (bps/Hz) & $\gamma_{l}$ (dB)
  \\\hline
  0 &   BPSK         & 0     & -$\infty$ \\
  1 &   BPSK      & 0.5   & -2.80     \\
  2 &   QPSK      & 1     & 0.19      \\
  3 &   QPSK      & 1.5   & 3.39      \\
  4 &   16-QAM    & 2.25  & 6.20      \\
  5 &   16-QAM    & 3     & 9.30       \\
  6 &   64-QAM    & 4.5   & 14.37
\end{tabular}
  \label{tab:transmission_modes}
\end{table}
%
%
The physical layer of the system is parameterized by using a $W = 20$~MHz channel.
The base time unit is chosen to be $T_{\mathrm{b}} = 2$~ms 
and a block transmission with fixed duration $T_{\mathrm{b}}$
is assumed.
The size of the upper layer data blocks 
is fixed to $N_{\mathrm{b}} = 10\, 000$ bits, but the number of information bits 
transmitted per fading block 
varies depending upon the selected AMC scheme.
The rate matrix~$\mathbf{R}(\theta)$ can be read from the Table~\ref{tab:transmission_modes}.
The infinite sum in the delay bound formula~\eqref{E:DelayBoundtwice}
is computed for the first $4000$ units of time.

The framework discussed earlier enables the derivation of delay constrained throughput of a various number of traffic sources whenever the MGF exists.
In this work we use a periodic source for the discrete time model that generates arrival traffic.
Such a traffic source with period $\tau$ produces $\delta$ units of workload (data blocks)
and its MGF reads~\cite{NetCal:Fidler06:MGFfadingChannel} 
\begin{equation}
\label{E:MA_periodic}
	\mathsf{M}_{A}(\theta,t) = e^{\theta \delta \left\lfloor \frac{t}{\tau}\right\rfloor} \left(1+\left(\frac{t}{\tau} - \left\lfloor \frac{t}{\tau}\right\rfloor\right)\left(e^{\theta \delta} - 1\right)\right) ,	
\end{equation}
where $t \geq 0$ and $\theta > 0$.
Here we set $\tau ~=~ 1$ and use the number of generated data blocks $\delta$ to set the arrival rate $\lambda$.

%
%
%
%
%
%
%

%
Fig.~\ref{fig:thVSdb_N2} depicts throughput as a function of delay guarantee~$d^{\mathrm{g}}$ for different delay bound violation probabilities $\varepsilon$.
One can observe the decrease in system 
throughput as the violation probability 
and / or the delay guarantee get(s) tighter.
\begin{figure}[t]
	\centering		 \includegraphics[width=0.90\columnwidth]{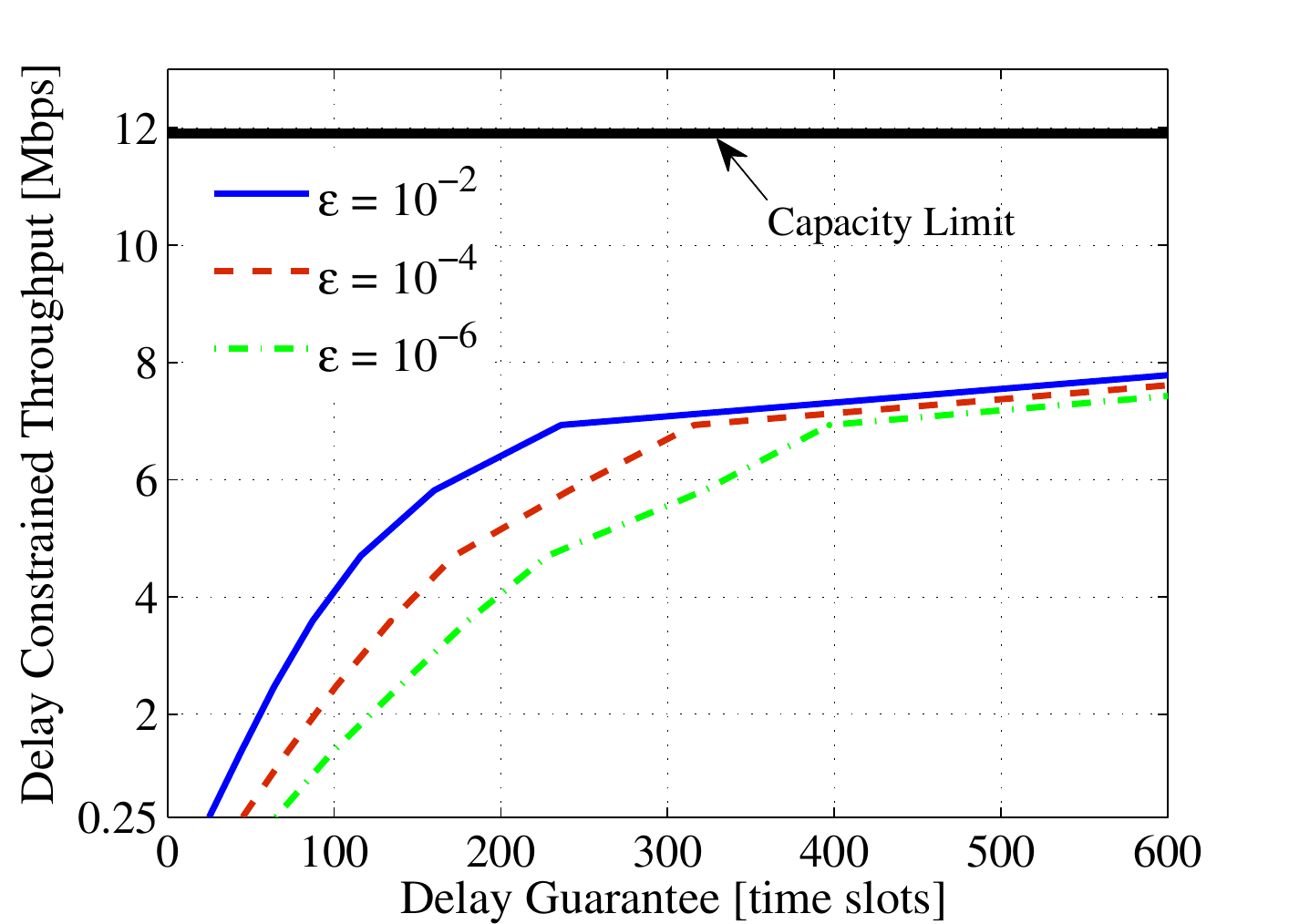}
	\caption{ $\mathsf{SNR}_{\mathrm{avg}}=6$~dB, $~f_{\mathrm{m}} =20$~Hz,  $~~\alpha = K / M = 0.5$}
	\label{fig:thVSdb_N2}
\end{figure}
\begin{figure}[t]
	\centering		 \includegraphics[width=0.90\columnwidth]{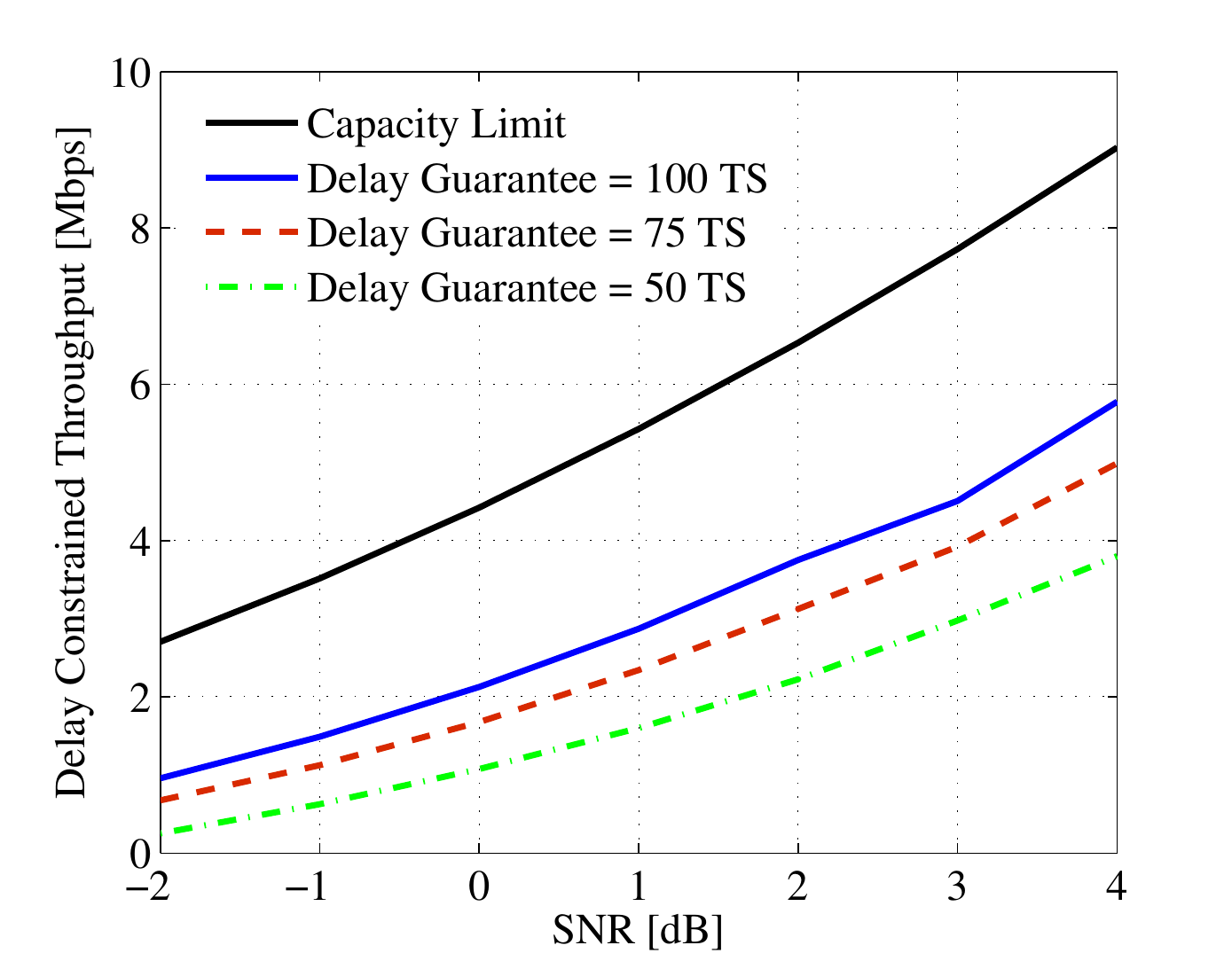}
	\caption{ $f_{\mathrm{m}} =50$~Hz,  $~\alpha = K / M = 0.5, ~\varepsilon = 10^{-2}$}
	\label{fig:thVSSNR_DG}
\end{figure}

The impact of the average 
received SNR ($\mathsf{SNR}_{\mathrm{avg}}= 1 / \sigma^{2}$) on the 
throughput is studied in Fig.~\ref{fig:thVSSNR_DG}.  
As expected, increasing the average received SNR leads to an improvement in the throughput, as seen in the classical information theoretic results as well.
For the considered system this is due to the fact that, the higher average received SNR increases the likelihood of a better channel (note that here this effect comes from
\eqref{eq:instantaneous_sinr_lmmse}, \eqref{eq:b1_fp_lmmse},
\eqref{eq:steadtStateProb} and \eqref{eq:pdf of_SNR_SISODSCDMA})
at any given time instant and, thus, allows for higher rate code books to be used at the 
transmitters on average. This can be clearly seen by observing the steady state vectors, e.g., for $\mathsf{SNR}_{\mathrm{avg}}=-2$~dB and $\mathsf{SNR}_{\mathrm{avg}}=4$~dB which are $[0.622, 0.234, 0.127, 0.017, 0.00044, 1.43\times10^{-7}]$ and $[0.25, 0.184, 0.261, 0.203, 0.096, 0.01, 3.991\times10^{-7} ]$, respectively.
%
%

Finally, Fig.~\ref{fig:thVSuserlaod_DG} 
depicts the delay constrained capacity as a function of the maximum 
user load $\alpha = K / M$.  With the delay constraint, the system 
throughput behaves similarly to the information theoretic case
(see, e.g., \cite[Fig.~1]{CDMA_ImpactofFreqFlatFading:Shamai-Verdu:01}),
but the optimum user load shifts towards zero when the delay guarantee
gets tighter.  
The non-linear behavior of the throughput as a function of the 
maximum user load is due to the interplay between
\eqref{eq:b1_fp_lmmse}, \eqref{E:th} and
\eqref{eq:capacity_limit}.
We also see that the higher the user load is, the further 
we have to back-off from the capacity to meet the delay guarantee.

\begin{figure}[t]
	\centering
		 \includegraphics[width=0.90\columnwidth]{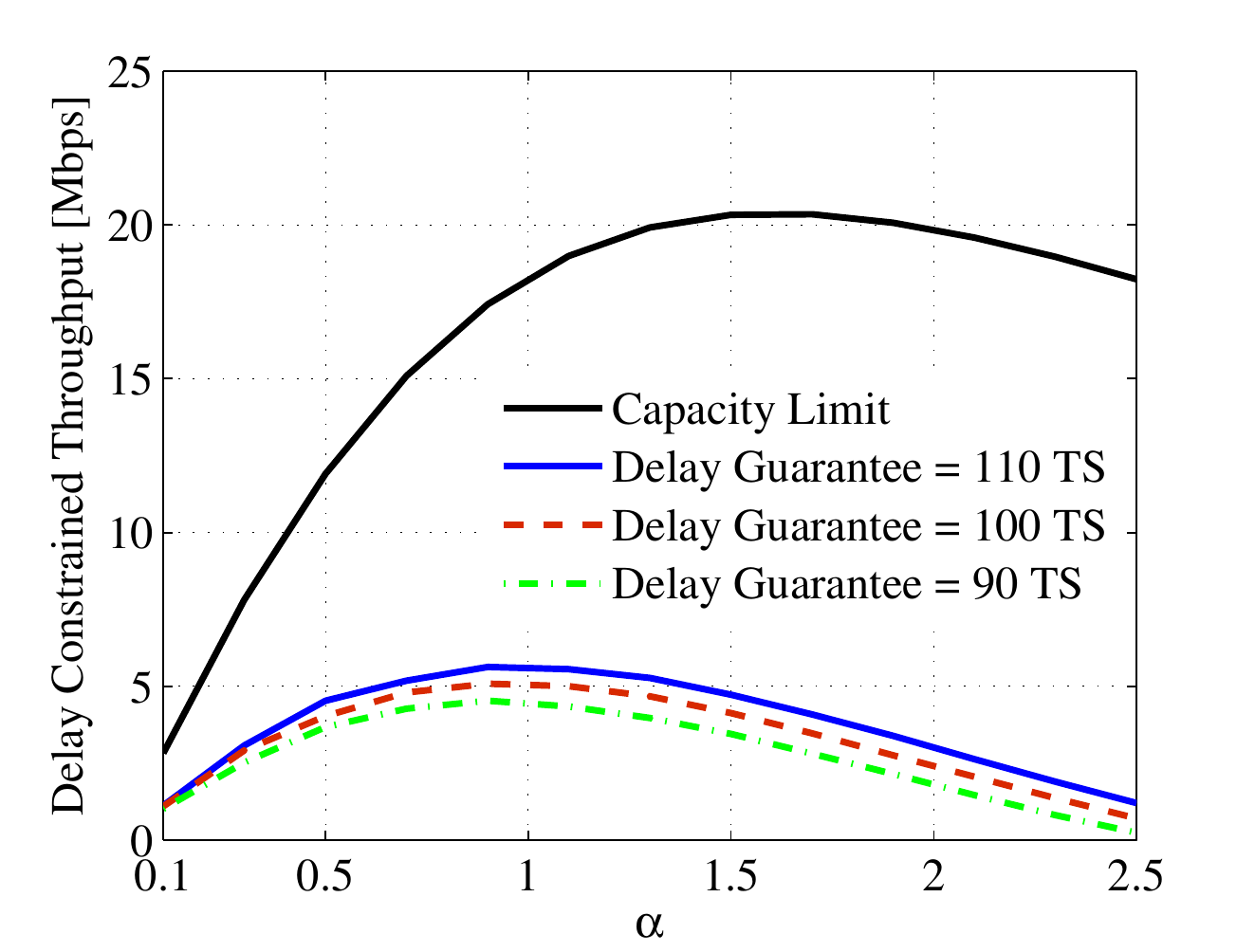}
	\caption{ $\mathsf{SNR}_{\mathrm{avg}}=6~$dB,$~f_{\mathrm{m}} = 20~$Hz,$~\varepsilon = 10^{-2}$} 
	\label{fig:thVSuserlaod_DG}
\end{figure}
%
%



\section{Conclusions and Future Work}
In this paper we formulate a method to find the delay constrained throughput of a multiuser DS-CDMA system with time varying channel.  Channel memory is modeled by a finite state Markov chain.
We present numerical results where we quantify the impact of increasing the delay guarantee and the user load on the delay constrained throughput for varying signal strength.
This work finds application in performance evaluation of wireless networks where the maximum throughput for a given delay guarantee and amount of
resources is of interested.
While in this work, we only used periodic source to illustrate the results, the same methodology can be applied to any traffic source with known MGF.

\bibliographystyle{IEEEtran}
\bibliography{./XBib_Kashif_Amr_Mikko}

\begin{thebibliography}{10}
\providecommand{\url}[1]{#1}
\csname url@samestyle\endcsname
\providecommand{\newblock}{\relax}
\providecommand{\bibinfo}[2]{#2}
\providecommand{\BIBentrySTDinterwordspacing}{\spaceskip=0pt\relax}
\providecommand{\BIBentryALTinterwordstretchfactor}{4}
\providecommand{\BIBentryALTinterwordspacing}{\spaceskip=\fontdimen2\font plus
\BIBentryALTinterwordstretchfactor\fontdimen3\font minus
  \fontdimen4\font\relax}
\providecommand{\BIBforeignlanguage}[2]{{%
\expandafter\ifx\csname l@#1\endcsname\relax
\typeout{** WARNING: IEEEtran.bst: No hyphenation pattern has been}%
\typeout{** loaded for the language `#1'. Using the pattern for}%
\typeout{** the default language instead.}%
\else
\language=\csname l@#1\endcsname
\fi
#2}}
\providecommand{\BIBdecl}{\relax}
\BIBdecl

\bibitem{Verdu-1998}
S.~Verd{\'u}, \emph{Multiuser Detection}.\hskip 1em plus 0.5em minus
  0.4em\relax Cambridge, UK: Cambridge University Press, 1998.

\bibitem{Richardson-Urbanke-2008}
T.~Richardson and R.~Urbanke, \emph{Modern Coding Theory}.\hskip 1em plus 0.5em
  minus 0.4em\relax Cambridge, UK: Cambridge University Press, 2008.

\bibitem{NIT:Jan08:RethinkingIT}
{J. G. Andrews \emph{et al.}}, ``Rethinking information theory for mobile ad
  hoc networks,'' \emph{IEEE, Commun. Mag.}, vol.~46, no.~12, 2008.

\bibitem{EBWireless:DWU03}
D.~Wu and R.~Negi, ``Effective capacity: A wireless link model for support of
  quality of service,'' \emph{IEEE Trans. on Wireless Commun.}, vol.~2, no.~4,
  pp. 630--643, 2003.

\bibitem{NetCal:Fidler06:MGFfadingChannel}
M.~Fidler, ``A network calculus approach to probabilistic quality of service
  analysis of fading channels,'' in \emph{Proc. of GLOBECOM}, 2006.

\bibitem{NetCal:Jiang05:StochServGuarantServerModel}
Y.~Jiang and P.~Emstad, ``{Analysis of stochastic service guarantees in
  communication networks: A server model},'' \emph{Quality of Service--IWQoS
  2005}, pp. 233--245, 2005.

\bibitem{EBWireless:Tang2007:CrossLayerModelingforQoS}
J.~Tang and X.~Zhang, ``{Cross-layer modeling for quality of service guarantees
  over wireless links},'' \emph{IEEE Trans. on Wireless Commun.}, vol.~6,
  no.~12, pp. 4504--4512, 2007.

\bibitem{CDMA:EffectiveBW:YuF06}
F.~Yu and V.~Krishnamurthy, ``{Effective bandwidth of multimedia traffic in
  packet wireless CDMA networks with LMMSE receivers: A cross-layer
  perspective},'' \emph{IEEE Trans. on Wireless Commun.}, vol.~5, pp. 525--530,
  2006.

\bibitem{NetCal:Jiang08:SNC:Bk}
Y.~Jiang and Y.~Liu, \emph{Stochastic Network Calculus}.\hskip 1em plus 0.5em
  minus 0.4em\relax Springer, 2008.

\bibitem{NetCal:Chang00:PerGuaran:Bk}
C.-S. Chang, \emph{Performance Guarantees in Communication Networks}.\hskip 1em
  plus 0.5em minus 0.4em\relax London, UK: Springer-Verlag, 2000.

\bibitem{NetCal:Fidler2006:EoEProbabNetCalWithMGF}
M.~Fidler, ``{An end-to-end probabilistic network calculus with moment
  generating functions},'' in \emph{Quality of Service, 2006. IWQoS 2006. 14th
  IEEE International Workshop on}.\hskip 1em plus 0.5em minus 0.4em\relax IEEE,
  2006, pp. 261--270.

\bibitem{WirelessMarkov:Sadeghi08:FiniteStateMarkov:Survey}
P.~Sadeghi, R.~Kennedy, P.~Rapajic, and R.~Shams, ``{Finite-state Markov
  modeling of fading channels},'' \emph{IEEE Signal Processing Magazine},
  vol.~25, no.~5, pp. 57--80, 2008.

\bibitem{EBWireless:Li07}
C.~Li, H.~Che, and S.~Li, ``{A wireless channel capacity model for quality of
  service},'' \emph{IEEE Trans. on Wireless Commun.}, vol.~6, no.~1, 2007.

\bibitem{CDMA:LinearMultiuserReceivers:Tse-Hanly-99}
D.~N.~C. Tse and S.~V. Hanly, ``Linear multiuser receivers: Effective
  interference, effective bandwidth and user capacity,'' \emph{IEEE Trans.
  Inform. Theory}, vol.~45, no.~2, pp. 641--657, March 1999.

\bibitem{Fabregas-Caire-2006}
A.~{Guill{\'e}n i F{\`a}bregas} and G.~Caire, ``Coded modulation in the
  block-fading channel: Coding theorems and code construction,'' \emph{IEEE
  Trans. Inform. Theory}, vol.~52, no.~1, pp. 91--114, Mar. 2006.

\bibitem{CDMA_ImpactofFreqFlatFading:Shamai-Verdu:01}
S.~Shamai and S.~Verd{\'u}, ``The impact of frequency-flat fading on the
  spectral efficiency of {CDMA},'' \emph{IEEE Trans. Inform. Theory}, vol.~47,
  no.~4, pp. 1302--1327, May 2001.

\bibitem{CDMA:OutputMAI:Zhang-Chong-Tse-01}
J.~Zhang, E.~K.~P. Chong, and D.~N.~C. Tse, ``Output {MAI} distributions of
  linear {MMSE} multiuser receivers in {DS-CDMA} systems,'' \emph{IEEE Trans.
  Inform. Theory}, vol.~47, no.~3, pp. 1128--1144, March 2001.

\bibitem{Liu-Zhou-Giannakis-2004}
Q.~Liu, S.~Zhou, and G.~B. Giannakis, ``Cross-layer combining of adaptive
  modulation and coding with truncated {ARQ} over wireless links,'' \emph{IEEE
  Trans. on Wireless Commun.}, vol.~3, pp. 1746--1755, 2004.

\bibitem{WirelessMarkov:FSMC:Wang2002finite}
H.~Wang and N.~Moayeri, ``{Finite-state {M}arkov channel-a useful model for
  radio communication channels},'' \emph{IEEE Trans. on Vehicular Tech.},
  vol.~44, no.~1, pp. 163--171, 2002.

\end{thebibliography}

\end{document}